\newtheorem{corollary}{Corollary}
\newtheorem{theorem}{Theorem}
\newtheorem{lemma}{Lemma}
\newtheorem{definition}{Definition}
\newtheorem{example}{Example}
\def \R{\mathbb{R}}
\def \C{\mathbb{C}}
\def \L {{\mathcal L }}
\def \M {{\mathcal M }}
\def \Q {{\mathcal Q }}
\def \E{{\textrm{E}}}
\begin{document}
\title{\huge Unified Analysis of the Average Gaussian Error Probability for
a Class of Fading Channels
\thanks{\hspace{-1.5mm}* The author is with the Dpto. de Ingenier\'ia de Comunicaciones, Universidad de M\'alaga, Spain.
{\tt{paris@ic.uma.es}}\newline
** This work is partially supported by the Spanish Government under project TEC2007-67289/TCM
and by AT4 wireless.}}

\author{
\vspace{0mm}
\authorblockN{ Jos\'e F. Paris}}


\maketitle
\begin{abstract}
This paper focuses on the analysis of average Gaussian error
probabilities in certain fading channels, i.e. we are interested in $\E[Q(\sqrt{p \gamma}]$
where $Q(\cdot)$ is the Gaussian $Q$-function, $p$ is a positive real number and $\gamma$ is a
nonnegative random variable.
We present a unified analysis of the average
Gaussian error probability, derive a compact
expression in terms of the Lauricella $F_D^{(n)}$ function that is applicable
to a broad class of fading channels, and
discuss the relation of this expression and expressions of this
type recently appeared in literature.
As an intermediate step in our derivations, we also obtain a compact expression for
the outage probability of the same class of fading channels.
Finally, we show how this unified analysis allows us
to obtain novel performance analytical results.
\end{abstract}


\vspace{0mm}
\begin{keywords}
Performance analysis, average bit error probability, outage probability, fading channels,
Lauricella functions.
\end{keywords}

\IEEEpeerreviewmaketitle

\section{Introduction}

\PARstart{F}{or} many decades, communication theorists have
analyzed the performance of single-channel and multichannel
receivers in a fading environment. Among performance measures,
the average bit error probability (BEP) is perhaps the one
that is most revealing about the nature of the system behavior.
The average BEP is defined as the average of the conditional BEP over the fading
statistics. The conditional BEP of a fading channel is often
equivalent to the BEP of an additive white Gaussian noise (AWGN) channel,
e.g. systems employing \mbox{M-AM} or \mbox{QAM} under ideal coherent detection \cite{Simon05}.
In such cases, the average BEP can be computed from the statistical expectation of Gaussian error
probabilities with respect to the fading distribution.

This paper focuses on the analysis of average Gaussian error
probabilities in fading channels, i.e. we are interested in $\E[Q(\sqrt{p \gamma}]$
where $Q(\cdot)$ is the Gaussian $Q$-function, $p$ is a positive real number and $\gamma$ is a nonnegative random variable.
We may note that the final average BEP expression is usually expressed as a weighted sum of average Gaussian error
probabilities, e.g. see \cite{chinos2003} for QAM with Gray mapping; however, for clarity these details will
be overlooked here.
The literature concerning average Gaussian error probability calculations is now quite voluminous.
The most complete account of this problem is found in \cite{Simon05}; nevertheless, a great deal
of new results have appeared after the publication of \cite{Simon05}. Some of these new closed-form results
involve the Lauricella function $F_D^{(n)}$, e.g. \cite{Lau1}-\cite{Lau3}.
The approach identified in these new contributions
is based on applying an appropriate change of variable
to the expression obtained by the well-known moment generating function (MGF) method developed in \cite{Simon05}.
Therefore, these results raise the natural question of whether the involved fading distributions
share a common property that leads to this particular mathematical form.

In this paper we identify the common property of certain fading distributions
that leads to average Gaussian probabilities expressed by the Lauricella function $F_D^{(n)}$. Such property
is related to the form of the associated MGF.
We derive unified expressions for the average Gaussian error probability and the outage probability,
which are applicable to a large class of fading distributions. Moreover, this unified analysis provides a systematic
method for obtaining new analytical results.

The remainder of this paper is organized as follows. The unified analysis is presented in \mbox{Section
\ref{analysis}}. In \mbox{Section \ref{applications}} we apply our analysis to derive
both published and novel analytical results. Finally, some conclusions are given in
\mbox{Section \ref{conclusions}.}

\section{Unified Analysis}
\label{analysis}

In this section we derive the key results of this work.
Some comments on notation are in order.
For an arbitrary function $\phi(x)$ we denote the Laplace transform as $\L[\phi(x);s]$.
As in \cite{Simon05}, we define the MGF of a nonnegative random variable $\gamma$ as
$\M_{\gamma}(s)=\E[e^{s \gamma}]=\L[f_{\gamma}(\gamma);-s]$, where
$s\in\C$ and $f_{\gamma}(\gamma)$ is the probability density function (PDF) of $\gamma$.

The following definition will be very useful for our purposes\footnotemark[1].
\addtocounter{footnote}{1}\footnotetext{This terminology is inspired by the geometric programming theory.}
\begin{definition}
[Monomial and Posynomial MGF]
\label{P MGF}
A nonnegative random variable $\gamma$ has a \mbox{\emph{posynomial}} MGF if its MGF has the form
\begin{equation}
\label{defMGF} \M_\gamma  (s) = \sum\limits_{k = 1}^K {c_k
\prod\limits_{i = 1}^{n_k } {\left( {1 - \frac{s} {{a_{k,i} }}}
\right)} ^{ - b_{k,i} } },
\end{equation}
where the involved parameters satisfy the following conditions
\begin{equation}
\label{condpos}
\left\{ \begin{gathered}
  \operatorname{Re} [a_{k,i} ] > 0,\quad i = 1,2, \ldots ,n_k \text{ and }k = 1,2, \ldots ,K, \hfill \\
  \sum\limits_{i = 1}^{n_k } {b_{k,i}  > 0} ,\quad k = 1,2, \ldots ,K, \hfill \\
  \sum\limits_{k = 1}^K {c_k  = 1}.  \hfill \\
\end{gathered}  \right.
\end{equation}
For the special case $K=1$, the MGF will be called \emph{monomial}.
\end{definition}
\vspace{2mm}
In the interest of brevity, we will say a random variable or a distribution is monomial or posynomial
if its associated MGF is
monomial or posynomial, respectively. In addition, the coefficients in Definition \ref{P MGF} will
be referred as \emph{characteristic coefficients} and
the conditions given in (\ref{condpos}) will be called \emph{compatibility
conditions}. The necessity of these conditions will be revealed along this section\footnotemark[2].
\addtocounter{footnote}{1}\footnotetext{We anticipate that the first and second compatibility conditions will be used in the
proof of Theorem \ref{teo}, while the third condition is necessary to satisfy $\M_{\gamma}(0)=1$.}
As shown in Table 1, identifying $\gamma$ with the instantaneous signal-to-noise ratio (SNR), we observe that
many common fading distribution have monomial MGF.


To grasp the generality of the results derived in this section, we may define the average Gaussian error probability
in a more fundamental form. Integral transforms theory provides a convenient framework for our purposes. In
\cite{Tulino2003} a new integral transform, called \emph{Shannon transform} was considered in relation to the
ergodic capacity analysis. Regarding the problem treated here, we may consider the following integral transform.
\begin{definition}
[Gaussian $\Q$-transform]
\label{Qtrans}
Given a nonnegative absolutely continuous random variable $\gamma$ with PDF $f_{\gamma}(\gamma)$, we define
\begin{equation}
{\Q}_\gamma  (p)\equiv\Q[f_{\gamma}(\gamma);p]=\E[Q(\sqrt{p \gamma})],
\end{equation}
where $p$ is a real nonnegative number.
\end{definition}
Some straightforward properties of the Gaussian $\Q$-transform, including an inversion formula, are collected
in the following result.
\begin{lemma}
\label{Q}
Let us consider a nonnegative absolutely continuous random variable
$\gamma$.
Then, the corresponding Gaussian $\Q$-transform is always defined in the set $\R^{+}\equiv[0,\infty)$,
and the following properties hold:\\
(i) ${\Q}_{\gamma}(p)$ is a continuous and decreasing function in $\R^{+}$ such that
${\Q}_{\gamma}(0)=\tfrac{1}{2}$ and
\mbox{$\mathop {\lim }\limits_{p \to \infty } {\Q}_\gamma  \left( p \right) = 0$.}\\
(ii) If ${\Q}_{\gamma}(p)=\Phi(p)$ then
\begin{equation*}
f_\gamma  \left( \gamma  \right) = \sqrt {2\pi } \frac{d}
{{d\gamma }}\left\{ {\sqrt \gamma  \int_{\varepsilon  - j\infty }^{\varepsilon  + j\infty } {\frac{{\Phi \left( p \right)}}
{{\sqrt p }}} e^{\frac{{p\gamma }}
{2}} dp} \right\},
\end{equation*}
for any $\varepsilon>0$.
\end{lemma}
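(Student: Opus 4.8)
The plan is to settle part (i) with elementary estimates on the $Q$-function and to prove part (ii) by recognizing $\Q_\gamma(p)/\sqrt p$ as an ordinary Laplace transform of a rescaled cumulative distribution function (CDF) and then inverting by the Bromwich integral. For part (i), recall that $0<Q(x)\le\tfrac12$ for all $x\ge0$, with $Q(0)=\tfrac12$, that $Q$ is continuous and strictly decreasing on $[0,\infty)$, and that $Q(x)\to0$ as $x\to\infty$. Since $p\ge0$ and $\gamma\ge0$ force $\sqrt{p\gamma}\ge0$, the integrand $Q(\sqrt{p\gamma})$ is bounded by $\tfrac12$, so $\Q_\gamma(p)=\E[Q(\sqrt{p\gamma})]$ is finite for every $p\in\R^{+}$. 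For $0\le p_1<p_2$ we have $Q(\sqrt{p_1\gamma})\ge Q(\sqrt{p_2\gamma})$ pointwise, and absolute continuity of $\gamma$ gives $\Pr[\gamma>0]=1$, so this inequality is strict on a set of probability one; hence $\Q_\gamma$ is strictly decreasing. Continuity of $\Q_\gamma$, the value $\Q_\gamma(0)=\E[Q(0)]=\tfrac12$, and $\lim_{p\to\infty}\Q_\gamma(p)=0$ then follow from dominated convergence with dominating function $\tfrac12$ (using $Q(\sqrt{p\gamma})\to0$ pointwise on $\{\gamma>0\}$).

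For part (ii), the key is the Laplace pair
\[
\frac{Q(\sqrt{pu})}{\sqrt p}=\int_{0}^{\infty}\frac{\mathbf{1}[\gamma>u]}{2\sqrt{2\pi\gamma}}\,e^{-p\gamma/2}\,d\gamma ,\qquad u\ge0,\ \operatorname{Re}[p]>0,
\]
which I would verify by the substitution $\gamma=2x^{2}/p$, under which the right-hand side collapses to $\tfrac{1}{2\sqrt p}\operatorname{erfc}(\sqrt{pu/2})=Q(\sqrt{pu})/\sqrt p$. Writing $\Phi=\Q_\gamma$ and letting $f_\gamma$, $F_\gamma$ denote the PDF and CDF of $\gamma$, I multiply this identity by $f_\gamma(u)$, integrate in $u$ over $[0,\infty)$, and interchange the order of integration; this is legitimate for $\operatorname{Re}[p]>0$ because the inner $\gamma$-integral is bounded uniformly in $u$ while $\int_0^\infty f_\gamma(u)\,du=1$. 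The outcome is
\[
\frac{\Phi(p)}{\sqrt p}=\int_{0}^{\infty}\frac{F_\gamma(\gamma)}{2\sqrt{2\pi\gamma}}\,e^{-p\gamma/2}\,d\gamma=\L\!\left[\frac{F_\gamma(\gamma)}{2\sqrt{2\pi\gamma}};\,\frac p2\right].
\]
Applying the Bromwich inversion formula and the change of variable $s=p/2$ then yields $F_\gamma(\gamma)=\sqrt{2\pi}\,\sqrt\gamma\,\tfrac1{2\pi j}\int_{\varepsilon-j\infty}^{\varepsilon+j\infty}\tfrac{\Phi(p)}{\sqrt p}e^{p\gamma/2}\,dp$, and differentiating in $\gamma$ (using $F_\gamma'=f_\gamma$ a.e.\ by absolute continuity) produces the claimed expression for $f_\gamma$, the Bromwich integral being understood with the customary $\tfrac1{2\pi j}$ normalization; as a by-product this identifies the outage probability $F_\gamma$, which the paper announces as an intermediate result.

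The step I expect to be delicate is the rigor of the final inversion. Along the contour $\operatorname{Re}[p]=\varepsilon$ one has, in general, only $\Phi(p)/\sqrt p=O(|\operatorname{Im}p|^{-1})$, so the Bromwich integral converges merely conditionally, and the exchange of the inversion with the $u$-integration must be justified at continuity points of $F_\gamma$ — which is precisely what the extra $d/d\gamma$ in the statement buys (it absorbs one integration by parts' worth of regularization; e.g.\ one may first invert $\Phi(p)/p$, which decays one power faster, and then differentiate). These points are handled under mild decay and smoothness hypotheses on $f_\gamma$, for instance integrability of $\gamma^{-1/2}e^{-\varepsilon\gamma}$ against $f_\gamma$ together with local smoothness of $F_\gamma(\gamma)/\sqrt\gamma$, all of which are satisfied by the fading densities considered later in the paper; the remaining manipulations (Fubini for $\operatorname{Re}[p]>0$ and the final differentiation) are routine thanks to the decaying factor $e^{-p\gamma/2}$.
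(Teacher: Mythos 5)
Your proof is correct and follows essentially the same route as the paper: both reduce part (ii) to the identity $\Phi(p)=\tfrac{\sqrt p}{2\sqrt{2\pi}}\int_0^\infty e^{-pt/2}F_\gamma(t)\,t^{-1/2}\,dt$ and then invoke inverse Laplace transformation, the only difference being that you reach this identity via an explicit Laplace pair plus Fubini while the paper integrates by parts. Your handling of part (i) (strictness via $\Pr[\gamma>0]=1$) and your remarks on the conditional convergence of the Bromwich integral are, if anything, more careful than the paper's own proof.
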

\begin{proof}
See Appendix I.
\end{proof}
Although the Gaussian $\Q$-transform is an attractive concept, it is beyond the scope of this paper
to develop these matters further.

Under the previous theoretical framework, one can think in Gaussian $\Q$-transform pairs. The key result of this work
is conceptually summarized as follows: monomial distributions and certain expressions involving
the Lauricella $F_D^{(n)}$ function are connected by
the Gaussian $\Q$-transform. The extension of this idea to posynomial distributions is straightforward.
From an operational point of view, once the underlaying fading distribution is identified
as posynomial, obtaining the Gaussian $\Q$-transform reduces to extracting the characteristic coefficients from the MGF.
The mathematically precise statements are given below.

It is most common to find monomial distributions in practical
problems, thus, we start presenting the results for the monomial case.
\begin{theorem}
\label{teo}
Let $\gamma$ be a \emph{monomial} random variable, i.e.
\begin{equation}
\M_{\gamma}(s) = \prod\limits_{i = 1}^n {\left( {1 - \frac{s}
{{a_i }}} \right)} ^{ - b_i },
\end{equation}
where $\{a_i\}_{i=1}^n$  and $\{b_i\}_{i=1}^n$ satisfy the compatibility conditions given in (\ref{condpos}). Then\\
(i) The cumulative distribution function (CDF) of $\gamma$ is expressed as
\begin{equation}
\label{Key1}
F_\gamma  \left( \gamma  \right) = \left\{ {\frac{{\prod\limits_{i = 1}^n {\left( {a_i } \right)^{b_i } } }}
{{\Gamma \left( {1 + \sum\limits_{i = 1}^n {b_i } } \right)}}} \right\}\gamma ^{\sum\limits_{i = 1}^n {b_i } } \Phi _2^{(n)}
\left( {b_1 , \ldots ,b_n ;1 + \sum\limits_{i = 1}^n {b_i } ; - a_1 \gamma , \ldots , - a_n \gamma } \right),
\end{equation}
where $\Gamma$ is the gamma function and $\Phi _2^{(n)}$ is the confluent Lauricella function defined in \cite{Srivastava1985}-\cite{Erdelyi1954}.\\
(ii) The Gaussian $\Q$-transform of $\gamma$ is given by
\begin{equation}
\label{Key2}
\begin{gathered}
  {\Q}_\gamma  (p) = \frac{1}
{{2\sqrt \pi  }}\left\{ {\prod\limits_{i = 1}^n {\left( {a_i } \right)^{b_i } } } \right\}\left\{ {\frac{{\Gamma \left( {\frac{1}
{2} + \sum\limits_{i = 1}^n {b_i } } \right)}}
{{\Gamma \left( {1 + \sum\limits_{i = 1}^n {b_i } } \right)}}} \right\}\left( {\frac{2}
{p}} \right)^{\sum\limits_{i = 1}^n {b_i } }  \hfill \\
  \quad  \times F_D^{(n)} \left( {\frac{1}
{2} + \sum\limits_{i = 1}^n {b_i } ,b_1 , \ldots ,b_n ;1 + \sum\limits_{i = 1}^n {b_i } ; - \frac{{2a_1 }}
{p}, \ldots , - \frac{{2a_n }}
{p}} \right), \hfill \\
\end{gathered}
\end{equation}
where $F_D^{(n)}$ is the Lauricella function defined in \cite{Srivastava1985}-\cite{Erdelyi1954}.
\end{theorem}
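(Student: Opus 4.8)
The plan is to prove both parts by reducing each to a Laplace-transform pair for the confluent Lauricella function $\Phi_2^{(n)}$ and then extending the resulting identity to all admissible parameters by analytic continuation; the compatibility conditions in (\ref{condpos}) are exactly what make every step legitimate, and part (ii) I would deduce from part (i) rather than establish from scratch.

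For part (i) I would start from the elementary relation $\L[F_\gamma(\gamma);s]=s^{-1}\M_\gamma(-s)$, valid for $\operatorname{Re}[s]>0$ (the condition $\sum_i b_i>0$ guarantees that $\gamma$ has no atom at the origin, so there is no extra boundary term). Using $\M_\gamma(-s)=\prod_{i=1}^n(1+s/a_i)^{-b_i}$ together with the identity $(1+s/a_i)^{-b_i}=(a_i/s)^{b_i}(1+a_i/s)^{-b_i}$, this rewrites as $\L[F_\gamma(\gamma);s]=\bigl(\prod_{i=1}^n a_i^{b_i}\bigr)\,s^{-1-\sum_i b_i}\,\prod_{i=1}^n(1+a_i/s)^{-b_i}$. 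I would then recognise the right-hand side through the pair
\begin{equation*}
\L\!\left[t^{\rho-1}\,\Phi_2^{(n)}(b_1,\ldots,b_n;\rho;\sigma_1 t,\ldots,\sigma_n t);s\right]=\Gamma(\rho)\,s^{-\rho}\prod_{i=1}^n\left(1-\frac{\sigma_i}{s}\right)^{-b_i},
\end{equation*}
which itself follows by term-by-term Laplace transformation of the defining series of $\Phi_2^{(n)}$ and the elementary sum $\sum_{m\ge0}(b)_m z^m/m!=(1-z)^{-b}$. Taking $\rho=1+\sum_i b_i$ and $\sigma_i=-a_i$, and invoking uniqueness of the inverse Laplace transform, I would obtain (\ref{Key1}); here $\operatorname{Re}[a_i]>0$ controls the abscissa of convergence and $\sum_i b_i>0$ makes $\operatorname{Re}[\rho]>0$ so that the pair applies.

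For part (ii) I would integrate $\Q_\gamma(p)=\E[Q(\sqrt{p\gamma})]=\int_0^\infty Q(\sqrt{p\gamma})f_\gamma(\gamma)\,d\gamma$ by parts; the boundary terms vanish because $F_\gamma(0)=0$ and $Q(\sqrt{p\gamma})\to0$ as $\gamma\to\infty$, and with $\tfrac{d}{d\gamma}Q(\sqrt{p\gamma})=-\tfrac{1}{2}\sqrt{p/(2\pi\gamma)}\,e^{-p\gamma/2}$ this gives
\begin{equation*}
\Q_\gamma(p)=\frac{1}{2}\sqrt{\frac{p}{2\pi}}\int_0^\infty\gamma^{-1/2}e^{-p\gamma/2}F_\gamma(\gamma)\,d\gamma=\frac{1}{2}\sqrt{\frac{p}{2\pi}}\,\L\!\left[\gamma^{-1/2}F_\gamma(\gamma);\tfrac{p}{2}\right].
\end{equation*}
Substituting (\ref{Key1}) turns $\gamma^{-1/2}F_\gamma(\gamma)$ into a constant multiple of $\gamma^{\rho-1}\Phi_2^{(n)}(b_1,\ldots,b_n;\delta;-a_1\gamma,\ldots,-a_n\gamma)$ with $\rho=\tfrac{1}{2}+\sum_i b_i$ and $\delta=1+\sum_i b_i$, and the fact that now $\rho\ne\delta$ is precisely what, via the companion pair
\begin{equation*}
\L\!\left[t^{\rho-1}\,\Phi_2^{(n)}(b_1,\ldots,b_n;\delta;\sigma_1 t,\ldots,\sigma_n t);s\right]=\Gamma(\rho)\,s^{-\rho}\,F_D^{(n)}\!\left(\rho,b_1,\ldots,b_n;\delta;\frac{\sigma_1}{s},\ldots,\frac{\sigma_n}{s}\right)
\end{equation*}
(again obtained term by term), replaces a product of powers by a genuine $F_D^{(n)}$. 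Setting $\sigma_i=-a_i$, $s=p/2$ and collecting the elementary constants — in particular $\tfrac{1}{2}\sqrt{p/(2\pi)}\,(p/2)^{-1/2}=1/(2\sqrt{\pi})$, while the factor $(p/2)^{-\sum_i b_i}$ produces the $(2/p)^{\sum_i b_i}$ term — I expect to arrive exactly at (\ref{Key2}). As a cross-check, the same expression should also follow directly from the Craig/MGF form $\Q_\gamma(p)=\tfrac{1}{\pi}\int_0^{\pi/2}\M_\gamma\bigl(-p/(2\sin^2\theta)\bigr)\,d\theta$ after the substitution $x=\sin^2\theta$ and identification of the Euler-type integral representation of $F_D^{(n)}$.

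The hard part is not the formal manipulation but its justification over the whole half-line $p\in\R^+$: the two $\Phi_2^{(n)}$ pairs are produced only where the relevant series converge (large $\operatorname{Re}[s]$), whereas (\ref{Key1}) and (\ref{Key2}) are asserted for all admissible $\gamma$ and all $p>0$. I would close this gap by analytic continuation in $s$ (equivalently in $p$): the Laplace integrals on the left are absolutely convergent and holomorphic for $\operatorname{Re}[s]>0$, since $\gamma^{\sum_i b_i}\Phi_2^{(n)}(b_1,\ldots,b_n;1+\sum_i b_i;-a_1\gamma,\ldots,-a_n\gamma)$ is a constant multiple of the bounded function $F_\gamma(\gamma)$; and the right-hand sides are holomorphic there because, with $\operatorname{Re}[a_i]>0$ and $p>0$, the arguments $-2a_i/p$ (and more generally $-a_i/s$ for $\operatorname{Re}[s]>0$) never reach the branch cut $[1,\infty)$ of $F_D^{(n)}$, so its Euler integral representation is in force. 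This is precisely where $\operatorname{Re}[a_i]>0$ enters essentially; $\sum_i b_i>0$ is what makes $\gamma^{-1/2}F_\gamma(\gamma)$ integrable at the origin and annihilates the boundary term in the integration by parts (the third compatibility condition $\sum_k c_k=1$ plays no role in the monomial case beyond ensuring $\M_\gamma(0)=1$). The remaining interchanges of summation and integration are routine consequences of absolute convergence under the same conditions.
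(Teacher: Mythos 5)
Your proof is correct and follows essentially the same route as the paper: part (i) inverts $\L[F_\gamma;s]=s^{-1}\M_\gamma(-s)$ via the Laplace pair for $\Phi_2^{(n)}$ (the paper cites \cite[p.~222, eq.~5]{Erdelyi1954} where you rederive it term by term), and part (ii) uses the same integration by parts to write $\Q_\gamma(p)$ as a Laplace transform of $t^{-1/2}F_\gamma(t)$ and then the $\Phi_2^{(n)}\!\to\!F_D^{(n)}$ pair (the paper's \cite[p.~286, eq.~43]{Srivastava1985}). Your added justification of convergence and analytic continuation, and the Craig-formula cross-check, are sound but go beyond what the paper records.
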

\begin{proof}
See Appendix II.
\end{proof}
Extending this last result to posynomial random variables is straightforward.
\begin{corollary}
\label{cor1}
If $\gamma$ is a \emph{posynomial} random variable such that its MGF is given by (\ref{defMGF}) then\\
(i) The CDF of $\gamma$ is expressed as
\begin{equation}
\label{Key12}
\begin{gathered}
  F_\gamma  \left( \gamma  \right) = \sum\limits_{k = 1}^K {c_k } \left\{ {\frac{{\prod\limits_{i = 1}^n {\left( {a_{k,i} } \right)^{b_{k,i} } } }}
{{\Gamma \left( {1 + \sum\limits_{i = 1}^n {b_{k,i} } } \right)}}} \right\}\gamma ^{\sum\limits_{i = 1}^n {b_{k,i} } }  \hfill \\
  \quad  \times \Phi _2^{(n_k )} \left( {b_{k,1} , \ldots ,b_{k,n_k } ;1 +
  \sum\limits_{i = 1}^{n_k } {b_{k,i} } ; - a_{k,1} \gamma , \ldots , - a_{k,n_k } \gamma } \right). \hfill \\
\end{gathered}
\end{equation}
(ii) The Gaussian $\Q$-transform of $\gamma$ is given by
\begin{equation}
\label{Key22}
\begin{gathered}
  Q_\gamma  (p) = \frac{1}
{{2\sqrt \pi  }}\sum\limits_{k = 1}^K {c_k } \left\{ {\prod\limits_{i = 1}^{n_k } {\left( {a_{k,i} } \right)^{b_{k,i} } } } \right\}\left\{ {\frac{{\Gamma \left( {\frac{1}
{2} + \sum\limits_{i = 1}^{n_k } {b_{k,i} } } \right)}}
{{\Gamma \left( {1 + \sum\limits_{i = 1}^{n_k } {b_{k,i} } } \right)}}} \right\}\left( {\frac{2}
{p}} \right)^{\sum\limits_{i = 1}^{n_k } {b_{k,i} } }  \hfill \\
  \quad  \times F_D^{(n_k )} \left( {\frac{1}
{2} + \sum\limits_{i = 1}^{n_k } {b_{k,i} } ,b_{k,1} , \ldots ,b_{k,n_k } ;1 + \sum\limits_{i = 1}^{n_k } {b_{k,i} } ; - \frac{{2a_{k,1} }}
{p}, \ldots , - \frac{{2a_{k,n_k} }}
{p}} \right). \hfill \\
\end{gathered}
\end{equation}
\end{corollary}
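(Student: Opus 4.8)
The plan is to leverage the fact that, by Definition~\ref{P MGF}, a posynomial MGF is nothing but a finite linear combination $\M_\gamma(s)=\sum_{k=1}^{K} c_k\,\M_k(s)$ of the monomial MGFs $\M_k(s)=\prod_{i=1}^{n_k}\bigl(1-\tfrac{s}{a_{k,i}}\bigr)^{-b_{k,i}}$, and then to push this decomposition through the two linear operations that recover the CDF and the Gaussian $\Q$-transform from the MGF. First I would verify that each $\M_k$ satisfies the hypotheses of Theorem~\ref{teo}: the first two compatibility conditions in (\ref{condpos}), namely $\operatorname{Re}[a_{k,i}]>0$ and $\sum_i b_{k,i}>0$, are imposed index-by-index and therefore hold for the $k$-th term taken in isolation, while $\M_k(0)=1$ is automatic for any monomial. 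Hence Theorem~\ref{teo} applies verbatim to each $\M_k$, furnishing a CDF $F_k(\gamma)$ of the form (\ref{Key1}) and a Gaussian $\Q$-transform $\Q_k(p)$ of the form (\ref{Key2}), both built from the parameters $\{a_{k,i},b_{k,i}\}$ with the summation range $n$ replaced by $n_k$.

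Next I would transfer the decomposition of the MGF to the quantities of interest. For the CDF, use the elementary identity $F_\gamma(\gamma)=\L^{-1}[\M_\gamma(-s)/s;\gamma]$, valid for an absolutely continuous nonnegative $\gamma$; since $s\mapsto\M_\gamma(-s)/s$ is linear in $\M_\gamma$ and the sum over $k$ is finite, the inverse Laplace transform may be taken term by term, giving $F_\gamma(\gamma)=\sum_{k=1}^{K}c_k\,F_k(\gamma)$. For the Gaussian $\Q$-transform I would note that, by uniqueness of the Laplace transform, the monomial MGFs determine component functions $f_k$ with $f_\gamma=\sum_k c_k f_k$, so that linearity of the transform in its PDF argument (Definition~\ref{Qtrans}) yields $\Q_\gamma(p)=\sum_{k=1}^{K}c_k\,\Q[f_k(\gamma);p]=\sum_{k=1}^{K}c_k\,\Q_k(p)$; equivalently, the MGF/Craig representation $\Q_\gamma(p)=\tfrac1\pi\int_0^{\pi/2}\M_\gamma\bigl(-\tfrac{p}{2\sin^2\theta}\bigr)\,d\theta$ from \cite{Simon05} makes the same linearity manifest, the finite sum passing freely outside the $\theta$-integration. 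Substituting the Theorem~\ref{teo} expressions for $F_k$ and $\Q_k$ into these two identities then reproduces exactly (\ref{Key12}) and (\ref{Key22}).

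I expect the only genuinely delicate point to be the rigorous justification of the component-wise decomposition, i.e.\ that $\M_\gamma=\sum_k c_k\M_k$ forces the corresponding splitting of $F_\gamma$ and $f_\gamma$: this relies on injectivity of the Laplace transform on a suitable function class, and, if one permits characteristic coefficients $c_k$ of either sign, on the harmless observation that $f_k$ and $F_k$ remain well-defined functions even when they are not individually a density and CDF, which does not affect the final formulas since (\ref{Key12})--(\ref{Key22}) are purely analytic identities. Everything else is bookkeeping: relabeling $(a_i,b_i)\mapsto(a_{k,i},b_{k,i})$ and $n\mapsto n_k$ in (\ref{Key1})--(\ref{Key2}), and interchanging a finite sum with the inverse Laplace transform and with the integral over $\theta$. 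No analytic estimate beyond those already used in the proof of Theorem~\ref{teo} is required.
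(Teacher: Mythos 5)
Your proposal is correct and follows essentially the same route as the paper, whose proof of this corollary is simply to repeat the steps of the proof of Theorem \ref{teo} term by term, i.e.\ to exploit the linearity of the map $\M_\gamma(-s)\mapsto \M_\gamma(-s)/s$ and of the Laplace-transform representation (\ref{appex1}) of $\Q_\gamma(p)$ so that each monomial summand is handled exactly as in the theorem. Your additional remarks on injectivity of the Laplace transform and on the fact that the individual components $f_k$, $F_k$ need not themselves be densities or CDFs (since the $c_k$ are only constrained to sum to one) are sensible refinements of the same argument rather than a different approach.
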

\begin{proof}
Repeat the steps of the proof of Theorem \ref{teo}.
\end{proof}

Theorem \ref{teo} and Corollary \ref{cor1} have important theoretical and practical consequences.
Next result provides an integral representation for the Gaussian $\Q$-transform of posynomial
random variables that is convenient for numerical evaluation.

\begin{corollary}
\label{cor2}
If $\gamma$ is a \emph{posynomial} random variable such that its MGF is given by (\ref{defMGF}) then\\
\begin{equation}
\label{cor333}
\begin{gathered}
  \Q_\gamma  (p) = \frac{1}
{{2\pi }}\sum\limits_{k = 1}^K {c_k } \left\{ {\prod\limits_{i = 1}^{n_k } {\left( {a_{k,i} } \right)^{b_{k,i} } } } \right\}\left( {\frac{2}
{p}} \right)^{\sum\limits_{i = 1}^{n_k } {b_{k,i} } }  \hfill \\
  \quad  \times \int_0^1 {u^{ - \frac{1}
{2} + \sum\limits_{i = 1}^{n_k } {b_{k,i}  } } \left( {1 - u} \right)^{ - \frac{1}
{2}} \left( {1 + \frac{{2a_{k,1} }}
{p}u} \right)^{ - b_{k,1} }  \ldots \left( {1 + \frac{{2a_{k,n} }}
{p}u} \right)^{ - b_{k,n_k } } du}.  \hfill \\[4mm]
\end{gathered}
\end{equation}
\end{corollary}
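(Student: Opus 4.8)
The plan is to obtain \eqref{cor333} directly from part (ii) of Corollary~\ref{cor1} by rewriting the Lauricella function $F_D^{(n_k)}$ appearing in \eqref{Key22} through its classical Euler‑type single‑integral representation. Recall that, for $\operatorname{Re}(c)>\operatorname{Re}(a)>0$,
\[
F_D^{(n)}\!\left(a,b_1,\ldots,b_n;c;x_1,\ldots,x_n\right)=\frac{\Gamma(c)}{\Gamma(a)\,\Gamma(c-a)}\int_0^1 t^{a-1}(1-t)^{c-a-1}\prod_{i=1}^{n}(1-x_i t)^{-b_i}\,dt ,
\]
valid provided the points $1-x_i t$, $t\in[0,1]$, stay off the branch cut $(-\infty,0]$ of $z\mapsto z^{-b_i}$; this is the standard generalization of Euler's integral for ${}_2F_1$ (see \cite{Srivastava1985}-\cite{Erdelyi1954}). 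Since Corollary~\ref{cor1} already carries out all the analytic work, the remaining task is essentially a bookkeeping exercise, plus a check that the hypotheses of this representation are met for the parameters at hand.

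First I would match parameters: for the $k$‑th summand of \eqref{Key22} one takes $a=\tfrac12+\sum_{i=1}^{n_k}b_{k,i}$, $c=1+\sum_{i=1}^{n_k}b_{k,i}$ and $x_i=-\tfrac{2a_{k,i}}{p}$, so that $c-a=\tfrac12$. Then I would verify the admissibility conditions using the compatibility conditions \eqref{condpos}: the inequality $\operatorname{Re}(c)>\operatorname{Re}(a)$ reduces to $\tfrac12>0$ and always holds; $\operatorname{Re}(a)>0$ follows from $\operatorname{Re}(a)=\tfrac12+\sum_i b_{k,i}>\tfrac12$ by the second compatibility condition $\sum_i b_{k,i}>0$; and for $t\in[0,1]$ one has $1-x_i t=1+\tfrac{2a_{k,i}}{p}t$, whose real part is $1+\tfrac{2\operatorname{Re}(a_{k,i})}{p}t>0$ by the first compatibility condition together with $p>0$, so the factors $(1-x_i t)^{-b_{k,i}}$ never meet the branch cut and are unambiguously defined and continuous on $[0,1]$.

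With these checks in place I would substitute the integral representation into each summand of \eqref{Key22}. Using $\Gamma(c-a)=\Gamma(\tfrac12)=\sqrt{\pi}$, the prefactor $\Gamma(c)/[\Gamma(a)\,\Gamma(c-a)]$ produced by the representation cancels exactly against the ratio $\Gamma(\tfrac12+\sum_i b_{k,i})/\Gamma(1+\sum_i b_{k,i})$ already present in \eqref{Key22}, leaving an extra factor $1/\sqrt{\pi}$; collecting constants turns the overall $\tfrac{1}{2\sqrt{\pi}}$ into $\tfrac{1}{2\sqrt{\pi}}\cdot\tfrac{1}{\sqrt{\pi}}=\tfrac{1}{2\pi}$. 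Writing out $t^{a-1}=t^{-1/2+\sum_i b_{k,i}}$ and $(1-t)^{c-a-1}=(1-t)^{-1/2}$ and relabelling $t$ as $u$ yields precisely \eqref{cor333}.

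I expect the only point demanding genuine care to be the justification of the Euler representation, i.e.\ the convergence of the integral at its endpoints and the well‑definedness of the power factors. At $u=1$ the integrand behaves like $(1-u)^{-1/2}$, which is integrable; at $u=0$ it behaves like $u^{-1/2+\sum_i b_{k,i}}$, which is integrable exactly because $\sum_i b_{k,i}>0$ (indeed, even $>-\tfrac12$ would suffice). Thus the compatibility conditions are not merely convenient but are precisely what licenses the integral representation, which is the substance of the argument; everything else is a routine rearrangement, and an analogous remark applies to the monomial case $K=1$.
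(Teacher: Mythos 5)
Your proof is correct and follows essentially the same route as the paper, which likewise obtains \eqref{cor333} by applying the Euler-type single-integral representation of $F_D^{(n)}$ (cited there as \cite[p.~283, eq.~34]{Srivastava1985}) to \eqref{Key22} and noting that the compatibility conditions \eqref{condpos} license that representation. Your write-up simply makes explicit the parameter matching, the cancellation $\Gamma(\tfrac12)=\sqrt{\pi}$ turning $\tfrac{1}{2\sqrt{\pi}}$ into $\tfrac{1}{2\pi}$, and the endpoint/branch-cut checks that the paper leaves to the reader.
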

\begin{proof}
Check that the compatibility conditions allow us to use the Euler-type integral representation given in
\cite[p. 283, eq. 34]{Srivastava1985} for the Lauricella functions in (\ref{Key22}).
\end{proof}

From this last Corollary it is straightforward to obtain an
asymptotic approximation for posynomial random variables.

\begin{corollary}
\label{cor3}
Let $\gamma$ be a \emph{posynomial} random variable such that its MGF is given by (\ref{defMGF}).
Then
\begin{equation}
\label{asym}
\Q_\gamma  (p) \sim \frac{1}
{{2\sqrt \pi  }}\sum\limits_{k = 1}^K {c_k } \left\{ {\prod\limits_{i = 1}^{n_k } {\left( {2a_{k,i} } \right)^{b_{k,i} } } } \right\}\frac{{\Gamma \left( {\frac{1}
{2} + \sum\limits_{i = 1}^{n_k } {b_{k,i}  } } \right)}}
{{\Gamma \left( {1 + \sum\limits_{i = 1}^{n_k } {b_{k,i} } } \right)}}\frac{1}
{{p^{\sum\limits_{i = 1}^{n_k } {b_{k,i} } } }},
\end{equation}
for large values of $p$.
\end{corollary}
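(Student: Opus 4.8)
The plan is to read the asymptotics directly off the Euler‑type integral representation of Corollary~\ref{cor2}, letting $p\to\infty$ under the integral sign. Fix a term $k$ in the sum of (\ref{cor333}) and abbreviate $\sigma_k\eqv\sum_{i=1}^{n_k}b_{k,i}$. The only $p$‑dependence inside the integral sits in the factors $\bigl(1+\tfrac{2a_{k,i}}{p}u\bigr)^{-b_{k,i}}$, each of which tends pointwise to $1$ on $u\in(0,1)$. First I would show these factors are bounded uniformly in $p\ge p_0>0$ and $u\in[0,1]$: the first compatibility condition $\operatorname{Re}[a_{k,i}]>0$ gives $\operatorname{Re}\bigl[1+\tfrac{2a_{k,i}}{p}u\bigr]\ge 1$, hence $\bigl|1+\tfrac{2a_{k,i}}{p}u\bigr|\ge 1$ and (for $b_{k,i}\ge 0$, which is the case for all the distributions of Table~1) $\bigl|(1+\tfrac{2a_{k,i}}{p}u)^{-b_{k,i}}\bigr|\le 1$. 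The integrand is then dominated by $u^{\sigma_k-\frac12}(1-u)^{-\frac12}$, which is integrable on $(0,1)$ precisely because the second compatibility condition forces $\sigma_k>0>-\tfrac12$ (so the singularity at $u=0$ is integrable), while the $u=1$ singularity is harmless.

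With this dominating function in hand, dominated convergence yields
\[
\lim_{p\to\infty}\int_0^1 u^{\sigma_k-\frac12}(1-u)^{-\frac12}\prod_{i=1}^{n_k}\Bigl(1+\tfrac{2a_{k,i}}{p}u\Bigr)^{-b_{k,i}}\,du
=\int_0^1 u^{\sigma_k-\frac12}(1-u)^{-\frac12}\,du
=\frac{\Gamma\!\bigl(\tfrac12+\sigma_k\bigr)\,\Gamma\!\bigl(\tfrac12\bigr)}{\Gamma(1+\sigma_k)},
\]
the last step being the Beta integral. Replacing each integral in (\ref{cor333}) by this limiting value, using $\Gamma(\tfrac12)=\sqrt{\pi}$, then absorbing $\bigl(\tfrac{2}{p}\bigr)^{\sigma_k}=2^{\sigma_k}p^{-\sigma_k}$ by merging $2^{\sigma_k}=\prod_{i=1}^{n_k}2^{b_{k,i}}$ into $\prod_{i=1}^{n_k}(a_{k,i})^{b_{k,i}}$ to form $\prod_{i=1}^{n_k}(2a_{k,i})^{b_{k,i}}$, and collecting $\tfrac{1}{2\pi}\sqrt{\pi}=\tfrac{1}{2\sqrt{\pi}}$, one recovers (\ref{asym}) term by term; the error $\sum_k A_k(p)\,[I_k(p)-I_k^\infty]$ is $o\!\bigl(p^{-\min_k\sigma_k}\bigr)$, which is what makes the displayed relation a genuine $\sim$ as $p\to\infty$.

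I expect the only point needing care is the justification of the limit–integral interchange, i.e.\ exhibiting the dominating function above; everything else is bookkeeping, and this is exactly where the first two compatibility conditions earn their keep ($\operatorname{Re}[a_{k,i}]>0$ controls the $p$‑dependent factors, $\sigma_k>0$ guarantees integrability at the origin). As a remark, a second and equally short route bypasses the integral: in (\ref{Key22}) every argument $-\tfrac{2a_{k,i}}{p}$ of the Lauricella function $F_D^{(n_k)}$ tends to $0$, and the multiple hypergeometric series defining $F_D^{(n_k)}$ reduces to its constant term $1$ when all variables vanish (and is continuous there, the series converging as soon as $|2a_{k,i}/p|<1$); substituting $F_D^{(n_k)}\to 1$ in (\ref{Key22}) and performing the same cosmetic rearrangement of the $2^{\sigma_k}$ factor again yields (\ref{asym}).
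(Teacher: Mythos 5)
Your proof follows essentially the same route as the paper's: the paper's entire argument is ``substitute the integrand in (\ref{cor333}) by its asymptotic approximation and use the basic properties of the beta function,'' and you have simply supplied the dominated-convergence justification for that substitution, plus the beta-integral evaluation and the bookkeeping with $2^{\sigma_k}$. One small caveat: your dominating function relies on the bound $\bigl|(1+\tfrac{2a_{k,i}}{p}u)^{-b_{k,i}}\bigr|\le 1$, which needs $b_{k,i}\ge 0$, and your parenthetical claim that this holds for every distribution in Table~1 is not quite right --- the Rician Shadowed MGF carries a factor $(1-\tfrac{\bar\gamma}{1+K}s)^{m-1}$, i.e.\ $b=1-m<0$ for $m>1$, and the OSTBC example likewise has factors with negative $b_{k,i}$; the Definition only requires $\sum_i b_{k,i}>0$. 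The fix is immediate: for $p\ge p_0$ and $u\in[0,1]$ one still has $1\le\bigl|1+\tfrac{2a_{k,i}}{p}u\bigr|\le 1+\tfrac{2|a_{k,i}|}{p_0}$, so each factor is uniformly bounded regardless of the sign of $b_{k,i}$, and your dominated-convergence argument goes through with a constant multiple of $u^{\sigma_k-\frac12}(1-u)^{-\frac12}$ as the dominating function.
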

\begin{proof}
Substitute the integrand in (\ref{cor333}) by its asymptotic approximation and use the basic properties
of the beta function \cite[p. 898-899]{Gradsh00}.
\end{proof}
As expected, from (\ref{asym}) it is inferred that the underlaying diversity order of $\Q_\gamma  (p)$ is
\begin{equation}
\mathop {\lim }\limits_{p \to \infty } \; - \frac{{\log \left( {\Q_\gamma  (p)} \right)}}
{{\log \left( p \right)}} = \mathop {\min }\limits_k \left\{ {\sum\limits_{i = 1}^{n_k } {b_{k,i} } } \right\},
\end{equation}
assuming $\gamma$ is a posynomial random variable.

\section{Applications}
\label{applications}

The mathematical tools developed in previous section provide us a unified analytical framework for many existing
results in literature; in addition, they systematically allow us to obtain new analytical results.

To calculate the average BEP, its asymptotic approximation and the outage probability
we can follow four steps:
\begin{enumerate}
\item Check if the underlaying distribution is posynomial. In such case extract the characteristic coefficients.
\item Use the Theorem and Corollaries of previous section to obtain analytical expressions for the Gaussian $\Q$-Transform, its asymptotic
approximation and the outage probability.
\item If possible, reduce the Lauricella functions $F_D^{(n)}$ and $\Phi _2^{(n)}$ to simpler functions.
\item Use known formulas to express the average BEP in terms of the Gaussian $\Q$-transform,
e.g. see \cite{chinos2003} for QAM with Gray mapping.
\end{enumerate}

Next, we provide some examples to illustrate how this approach allows us
to derive both published and novel results. For brevity we will mainly focus on the first step described above.

\begin{example}\emph{Nakagami-$m$ fading.}
This a well-known example \cite{Simon05}\cite{Eng95}.
From Table I, we can extract the characteristic parameters for the Nakagami-$m$ distribution; specifically,
$a_1=m/\bar \gamma$ and $b_1=m$. From (\ref{Key1}) we obtain
\begin{equation}
\label{ejemplo-1-1}
F_\gamma  \left( \gamma  \right) = \left\{ {\frac{{\left( {\frac{m}
{{\bar \gamma }}} \right)^m }}
{{\Gamma \left( {1 + m} \right)}}} \right\}\gamma ^m {}_1F_1 \left( {m,1 + m; - \frac{m}
{{\bar \gamma }}\gamma } \right).
\end{equation}
Then, after considering \cite[eq. 8.351-2]{Gradsh00}, we derive the well-known formula
\begin{equation}
F_\gamma  \left( \gamma  \right) =1 - \frac{{\Gamma \left( {m,\frac{m}
{{\bar \gamma }}\gamma } \right)}}
{{\Gamma \left( m \right)}}.
\end{equation}
Applying (\ref{Key2}) and considering \cite[eq. 9.131-1]{Gradsh00} yields
\begin{equation}
\label{ejemplo-1-2}
{\Q}_\gamma  (p) = \frac{1}
{{2\sqrt \pi  }}\left\{ {\frac{{\Gamma \left( {\frac{1}
{2} + m} \right)}}
{{\Gamma \left( {1 + m} \right)}}} \right\}\left( {\frac{1}
{c(p)}} \right)^m \left( {1 + \frac{1}
{c(p)}} \right)^{1/2 - m} {}_2F_1 \left( {\frac{1}
{2},1;m + 1; - \frac{1}
{c(p)}} \right),
\end{equation}
where $c(p) \doteq {{p \bar \gamma}}/{{(2m)}}$.
This expression is very similar to that given \cite[eq. 5.A.2]{Simon05}\cite[eq. A.8]{Eng95}, which was derived
by a different approach. Interestingly,
it has been checked numerically that both are equivalent; however, the author has not been able to find the transformation
between Gauss hypergeometric functions ${}_2F_1$ connecting (\ref{ejemplo-1-2}) and \cite[eq. 5.A.2]{Simon05}.
\hspace{36mm}\QEDopen
\end{example}

\begin{example}\emph{Maximal ratio combining (MRC) with independent but nonidentically distributed
branches.}
A formula for the Gaussian $\Q$-transform over Hoyt fading channels is derived in \cite{Lau2} in terms of the
Lauricella $F_D^{(n)}$ function. Now we show how to perform a unified analysis for
MRC including Hoyt distributed branches.
Let us consider receive MRC with $L$ independent but nonidentical distributed branches.
We assume that $L_1$ branches
exhibit Nakagami-$q$ (Hoyt) fading and the remainder of the $L-L_1$
branches are better modelled by the Nakagami-$m$ fading model.
The Nakagami-$m$ fading model is used for both line-of-sight
(LOS) and NLOS scenarios, while the Nakagami-$q$ distribution
is an alternative to model NLOS channels.
We assume a general scenario where each branch has arbitrary parameters, i.e. $\{q_j,\bar\gamma_j\}_{j=1}^{L_1}$
for the first $L_1$ branches and $\{m_j,\bar\gamma_j\}_{j=L_1 +1}^{L}$ for the remainder of the branches.
Analyzing such scenario is straightforward with the results derived in this work,
after observing that the associated MGF is monomial
\begin{equation}
\M_\gamma  (s) = \prod\limits_{i = 1}^{L_1 } {\left( {1 - \frac{{2q_i^2 }}
{{1 + q_i^2 }}\bar \gamma _i s} \right)^{ - 1/2} \prod\limits_{i = L_1  + 1}^{2L_1 } {\left( {1 - \frac{2}
{{1 + q_i^2 }}\bar \gamma _i s} \right)^{ - 1/2} } } \prod\limits_{i = 2L_1  + 1}^{L_1  + L } {\left( {1 - \frac{{\bar \gamma _i }}
{{m_i }}s} \right)^{ - m_i } }.
\end{equation}
This scenario can be further generalized to consider large-scale time variations. Let us assume that the number
of branches with Nakagami-$q$ fading $L_1$ is also random. Then, if $\Pr[l]$ represents the probability of having
$l$ branches with Nakagami-$q$ fading, the associated MGF is now posynomial
\begin{equation}
\begin{gathered}
  \M_\gamma  (s) = \sum\limits_{l = 1}^L {\Pr [l]\prod\limits_{i = 1}^l {\left( {1 - \frac{{2q_{l,i}^2 }}
{{1 + q_{l,i}^2 }}\bar \gamma _{l,i} s} \right)^{ - 1/2} } }  \hfill \\
  \quad  \times \prod\limits_{i = l + 1}^{2l} {\left( {1 - \frac{2}
{{1 + q_{l,i}^2 }}\bar \gamma _{l,i} s} \right)^{ - 1/2} } \prod\limits_{i = 2l + 1}^{l + L} {\left( {1 - \frac{{\bar \gamma _{l,i} }}
{{m_{l,i} }}s} \right)^{ - m_{l,i} } }.  \hfill \\
\end{gathered}
\end{equation}
A system employing MRC with independent $\eta$-$\mu$ distributed
branches has been recently analyzed in \cite{Lau3}. This system
generalizes the Nakagami-$q$/Nakagami-$m$ scenario described above
and it requires to estimate the set of $3L$ parameters
$\{\eta_j,\mu_j,\bar\gamma_j\}_{j=1}^{L}$. Since the
\mbox{Nakagami-$q$/Nakagami-$m$} scenario requires $2L$ channel
parameters, it is a reasonable alternative for certain
applications. Interestingly, if we apply Theorem \ref{teo} to the
monomial MGF considered in \cite[eq. 2]{Lau3}, we can extend the
performance analysis in \cite{Lau3} with a closed-form expression
for the outage probability and an asymptotic approximation for the
average BEP.\hspace{67mm}\QEDopen
\end{example}

\begin{example}\emph{Orthogonal space-time block codes (OSTBC) in spatially correlated multiple-input
multiple-output (MIMO) channels.} In \cite{Wong2006} the
underlying MGF was derived for different systems using OSTBC in
shadowed Rician MIMO fading channels. Such MGFs are posynomial,
thus, all results derived in previous section are applicable,
leading to novel analytical results. To illustrate this idea, we
consider the particular MGF for correlated LOS component and
spatially white scattered component \cite[eq. 23]{Wong2006}
\begin{equation}
\label{ejemplo3}
\begin{gathered}
  \M_\gamma  (s) = \frac{1}
{{\left( {1 + as} \right)^{n_r n_t } }}\prod\limits_{i = 1}^{n_t n_r } {\left( {1 + \frac{{bs}}
{{1 + as}}} \right)^{ - m} }  \hfill \\
  \quad  = \left( {1 + as} \right)^{ - n_r n_t } \prod\limits_{i = 1}^{n_t n_r }
  {\left( {1 + (a + b)s} \right)^{ - m} } \prod\limits_{i = 1}^{n_t n_r } {\left( {1 + as} \right)^m },  \hfill \\
\end{gathered}
\end{equation}
where $n_t$,$n_r$,$a$,$b$ and $m$ are channel and system parameters defined in \cite{Wong2006}. After rewriting
the MGF given in \cite[eq. 23]{Wong2006}, we clearly observe in (\ref{ejemplo3}) that it is a monomial MGF.
\hspace{17mm}\QEDopen
\end{example}

\section{Conclusions}
\label{conclusions}
A solid analytical framework for the computation of average Gaussian error probabilities has been derived
in this paper.
This quantity allows us to obtain the average BEP in a variety of wireless communications systems.
It has been shown that if the MGF of the underlaying fading distribution has certain posynomial structure, the
average Gaussian error probabilities and the outage probability can be expressed in terms of
Lauricella functions. Finally, this analytical framework has been used to derive known and novel results in
a simple and systematic way.

\appendices

\section{Proof of Lemma \ref{Q}}

The existence of the Gaussian $\Q$-transform is clear after taking into account
that $Q(\sqrt{p\gamma})\leq \tfrac{1}{2}$ for every $p>0$.

(i) Continuity follows form the fact that the kernel $Q(\sqrt{p\gamma})$ is continuous and bounded in $\R^{+}$.
If $p_1<p_2$ then $Q(\sqrt{p_1\gamma})>Q(\sqrt{p_2\gamma})$ for all $\gamma$ in $\R^{+}$, thus,
$\Q_{\gamma}(p_1)>\Q_{\gamma}(p_2)$. The property
$\mathop {\lim }\limits_{p \to \infty } {\Q}_\gamma  \left( p \right) = 0$ follows after
applying the monotone convergence theorem.

(ii) After integrating by parts $Q_{\gamma}(p)$ we can write\\
\begin{equation}
\label{appex1}
\Phi (p) = \frac{{\sqrt p }}
{{2\sqrt {2\pi } }}\int_0^\infty  {e^{ - \frac{{pt}}
{2}} F_\gamma  (t)t^{ - 1/2} dt}.
\end{equation}
The integral in (\ref{appex1}) is recognized as a Laplace transform converging for $\Re[p]>0$. Thus, after applying the inverse Laplace
transform to this equality, the desired result is obtained.

\section{Proof of Theorem \ref{teo}}

(i) Since $\L[f_\gamma  \left( t \right);s] = \prod\limits_{i = 1}^n {\left( {1 + \frac{s}
{{a_i }}} \right)} ^{ - b_i }$, we can write
\begin{equation}
\label{appex2}
\L[F_\gamma  \left( t \right);s] = \frac{1}
{s}\L[f_\gamma  \left( t \right);s] =
\left\{ {\frac{{\prod\limits_{i = 1}^n {\left( {a_i } \right)^{b_i } } }}
{{\Gamma \left( {\sum\limits_{i = 1}^n {b_i } } \right)}}} \right\}\left\{ {\frac{{\Gamma \left( {\sum\limits_{i = 1}^n {b_i } } \right)}}
{{s^{\sum\limits_{i = 1}^n {b_i } } }}} \right\}\prod\limits_{i = 1}^n {\left( {1 - \frac{{( - a_i )}}
{s}} \right)} ^{ - b_i },
\end{equation}
for $\Re[s]>0$.
Taking into account
the first and second compatibility conditions given in (\ref{condpos}) and
identifying \cite[p. 222, eq. 5]{Erdelyi1954} from (\ref{appex2}), the expression (\ref{Key1}) is obtained.

(ii) Introducing (\ref{Key1}) in (\ref{appex1}) yields
\begin{equation}
\label{appex3}
{\Q}_\gamma  (p) = \frac{{\sqrt p }}
{{2\sqrt {2\pi } }}\left\{ {\frac{{\prod\limits_{i = 1}^n {\left( {a_i } \right)^{b_i } } }}
{{\Gamma \left( {\sum\limits_{i = 1}^n {b_i } } \right)}}} \right\}\int_0^\infty  {e^{ - \frac{p}
{2}t} t^{ - 1 - 1/2 + \sum\limits_{i = 1}^n {b_i } } \Phi _2^{(n)}
\left( {b_1 , \ldots ,b_n ,\sum\limits_{i = 1}^n {b_i } ; - a_1 t, \ldots , - a_n t} \right)dt}.
\end{equation}
Taking into account
the first and second compatibility conditions given in (\ref{condpos}) and
identifying \cite[p. 286, eq. 43]{Srivastava1985} from (\ref{appex3}),
the expression (\ref{Key2}) is obtained.

\newpage

\begin{table}[t!]
\caption{Some Monomial MGFs.}
\begin{center}
\small{\begin{tabular}{c | c | c | c | c | c | c | c | c }

\hline\hline
\multicolumn{3}{ c }{Fading Distribution} & \multicolumn{3}{ c }{MGF $\M_{\gamma}(s)$}
& \multicolumn{3}{ c }{References}\\

\hline\hline
\multicolumn{3}{c |}{Rayleigh}
& \multicolumn{3}{| c |}{
$
\begin{gathered}
\hfill \\[-5mm]
  \left( {1 - \bar \gamma s} \right)^{ - 1} ;\hfill\\
  \quad 0 \leqslant \bar \gamma .
\end{gathered}
$
}
& \multicolumn{3}{| c }{\cite{Rayleigh},\cite{Simon05}}\\
\hline
\multicolumn{3}{c |}{Nakagami-$q$ (Hoyt)}
& \multicolumn{3}{| c |}{
$
\begin{gathered}
\hfill \\[-5mm]
  \left( {1 - \frac{{2q^2 }}
{{1 + q^2 }}\bar \gamma s} \right)^{ - 1/2} \left( {1 - \frac{2}
{{1 + q^2 }}\bar \gamma s} \right)^{ - 1/2};\hfill \\
 0 \leqslant \bar \gamma, 0 < q \leqslant 1.
\end{gathered}
$
}
& \multicolumn{3}{| c }{\cite{Hoyt1947},\cite{Simon05}}\\
\hline
\multicolumn{3}{c |}{Nakagami-$m$}
& \multicolumn{3}{| c |}{
$
\begin{gathered}
\hfill \\[-6mm]
\left( {1 - \frac{{\bar \gamma }}
{m}s} \right)^{ - m} ;\hfill\\
\quad\tfrac{1}
{2} \leqslant m.
\end{gathered}
$
}
& \multicolumn{3}{| c }{\cite{Nakagami1960},\cite{Simon05}}\\
\hline
\multicolumn{3}{c |}{Rician Shadowed$^{\star}$}
& \multicolumn{3}{| c |}{
$
\begin{gathered}
\hfill \\[-5mm]
  \left( {1 - \frac{{\bar \gamma }}
{{1 + K}}s} \right)^{m - 1} \left( {1 - \left( {1 + \frac{K}
{m}} \right)\frac{{\bar \gamma }}
{{1 + K}}s} \right)^{ - m};  \hfill \\
  0 \leqslant \bar \gamma , 0 \leqslant K, 0 < m.
\end{gathered}
$
}
& \multicolumn{3}{| c }{\cite{Abdi03},\cite{Simon05}}\\
\hline
\multicolumn{3}{c |}{$\eta$-$\mu$ Physical Model}
& \multicolumn{3}{| c |}{
$
\begin{gathered}
\hfill \\[-5mm]
  \left( {1 - s\frac{{\bar \gamma }}
{{n\left( {h + H} \right)}}} \right)^{ - n/2} \left( {1 - s\frac{{\bar \gamma }}
{{n\left( {h - H} \right)}}} \right)^{ - n/2}  \hfill \\
  0 \leqslant \bar \gamma ,n = 1,2, \ldots  \hfill \\
  \text{Format 1} \Rightarrow h = \frac{{2 + \eta ^{ - 1}  + \eta }}
{4},H = \frac{{\eta ^{ - 1}  - \eta }}
{4},0 < \eta  < \infty  \hfill \\
  \text{Format 2} \Rightarrow h = \frac{1}
{{1 - \eta ^2 }},H = \frac{\eta }
{{1 - \eta ^2 }}, - 1 < \eta  < 1 \hfill\\
\end{gathered}
$
}
& \multicolumn{3}{| c }{\cite{Yacoub07}}\\
\hline\hline

\end{tabular}}
\small{\begin{tabular}{l}
{{\normalsize$^{\star}$} Note that $K\doteq \frac{\Omega}{2b_0}$, using the same notation as in \cite{Abdi03},\cite{Simon05}.}\\
\end{tabular}}
\end{center}
\end{table}


\end{document}